\newtheorem{theorem}{Theorem}
\newtheorem{proposition}{Proposition}
\DeclareMathOperator{\toep}{toep}
\DeclareMathOperator{\diagtoep}{d-toep}
\DeclareMathOperator{\trace}{tr}
\DeclareMathOperator{\rank}{rank}
\DeclareMathOperator{\spn}{span}
\DeclareMathOperator{\sign}{sign}
\DeclareMathOperator{\sdp}{SDP}
\DeclareMathOperator{\st}{s.t.\ }
\DeclareMathOperator{\real}{Re}
\DeclareMathOperator{\diag}{diag}
\begin{document}
%
\title{Distributed Compressed Sensing off the Grid}
%
%
%

\author{Zhenqi~Lu*,~Rendong~Ying,~Sumxin~Jiang,
~Peilin~Liu,~\IEEEmembership{Member,~IEEE,}~and~Wenxian~Yu,~\IEEEmembership{Member,~IEEE}
\thanks{*e-mail: zhenqilu2014@gmail.com}
\thanks{The authors are with the Dept. of Electrical Engineering, Shanghai Jiao Tong University, Shanghai, P.R. China. This work was partially supported by NSFC under grant number 61171171.}}

%
%

\markboth{Lu \emph{\lowercase{et al.}}: Distributed Compressed Sensing off the Grid}{TBD}%
%



\maketitle

\begin{abstract}
This letter investigates the joint recovery of a frequency-sparse signal ensemble sharing a common frequency-sparse component from the collection of their compressed measurements. Unlike conventional arts in compressed sensing, the frequencies follow an off-the-grid formulation and are continuously valued in $\left\lbrack 0,1 \right\rbrack$. As an extension of atomic norm, the concatenated atomic norm minimization approach is proposed to handle the exact recovery of signals, which is reformulated as a computationally tractable semidefinite program. The optimality of the proposed approach is characterized using a dual certificate. Numerical experiments are performed to illustrate the effectiveness of the proposed approach and its advantage over separate recovery.
\end{abstract}

\begin{IEEEkeywords}
compressed sensing, basis mismatch, joint sparsity, atomic norm, semidefinite program
\end{IEEEkeywords}

%
\IEEEpeerreviewmaketitle

\section{Introduction}
\IEEEPARstart{C}{ompressed Sensing} (CS) is an emerging theory enabling sub-Nyquist sampling via combination of signal acquisition and signal compression \cite{donoho2006compressed,candes2006robust,candes2008introduction,baraniuk2007compressive} . Despite its remarkable impact on a wide range of signal processing theory and methods, conventional CS developments are constrained to signals with sparse or compressible representations on a pre-defined grid \cite{duarte2013spectral,candes2011compressed,fannjiang2012coherence}. However, in applications including communciation, radar, seismology, localization and remote sensing, signals of interest are usually specified by parameters in a continuous domain \cite{stoica2005spectral,ekanadham2011recovery,malioutov2005sparse,parrish2009improved,meng2011collaborative,mishali2011xampling,barilan2014subnyquist,fang2014super}. Performance degradation due to \emph{basis mismatch} between real parameters and pre-defined grid is studied and addressed \cite{chi2011sensitivity,scharf2011sensitivity}, and many approaches have been proposed to mitigate this effect \cite{mishra2014spectral,fyhn2013spectral,nichols2014reducing}.
\newline\indent Most recently, a group of works has concentrated on the obviation of the basis mismatch conundrum. It has been shown that a frequency-sparse signal can be successfully recovered from its consecutive sub-Nyquist samples using total-variation minimization \cite{rudin1987real}, which can be solved via semidefinite program (SDP), where only a minimum separation between spectral spikes is required \cite{candes2013towards}. The usage of atomic norm \cite{chandrasekaran2007convex} extends this work to the random sampling regime, and reaches improved trade-off between minimum spectral separation and number of observations required \cite{tang2013compressed}. This framework has been further extended to cases including multiple measurement vectors \cite{chi2014joint,liao2014music}, two-dimensional frequencies \cite{chi2013compressive}, direction-of-arrival estimation \cite{tan2013direction,fang2014super}, spectrum estimation with block prior information \cite{mishra2014super}, etc. Another recent approach is to apply matrix pencil \cite{hua1992estimating} to CS, and reformulate the problem as structured matrix completion \cite{chen2013spectral}. In addition, inspired by the idea of model selection \cite{nadler2011model}, the recovery problem is resolved as a parametric estimation problem via order selection \cite{lu2014spectral,nielsen2014joint}, which can be solved efficiently using greedy methods.
\newline\indent In this letter, we address the problem of simultaneously recovering a \emph{joint frequency-sparse} (JFS) signal ensemble sharing a common frequency-sparse component, with frequencies continuously valued in $\left\lbrack 0,1 \right\rbrack$. This common/innovation joint sparsity model is shown to significantly reduce the number of measurements in conventional distributed CS framework by utilizing common information shared in multiple signals \cite{baron2009distributed}. Our main contribution is to develop the continuous counterpart of the joint sparsity model, and propose the \emph{concatenated atomic norm} (CA-norm) for the description of joint frequency sparsity, of which the minimization can be solved via SDP. We also characterize a dual certificate for the optimality of the proposed optimization problem. Numerical results are given to illustrate the effectiveness of our approach and its advantage over separate recovery, which indicate a significant reduction in the number of measurements per signal required for successful recovery. Empirical observations also show improved performance for ensemble involving a large number of signals, implying the promise of application to large-scale sensor systems including MIMO communication, sensor array, multi-antenna, radar array, etc., where signals sensed are affected by structured global (common) factors and structured local (innovation) factors combined.

\section{Joint Frequency-Sparse Signal Ensemble}
Let $\Lambda = \left\lbrace 1,2,\ldots,J \right\rbrace$ denote the set of indices for the $J$ signals in the ensemble. Denote the \emph{signals} in the ensemble by $x_j\in\mathbb{C}^n$, and assume that each $x_j$ is frequency-sparse. The signal ensemble is denoted by $X = \lbrack x_1^{\ast},\ldots,x_J^{\ast} \rbrack^{\ast}$. The superscript $^{\ast}$ means Hermitian transpose. Each signal $x_j$ is sensed using a different sensing matrix $\Phi_j\in\mathbb{C}^{m_j\times n}$, and the corresponding measurement is denoted by $y_j=\Phi_jx_j\in\mathbb{C}^{m_j}$. Define $\Phi = \diag\left(\Phi_1,\ldots,\Phi_J\right)$. In the JFS setting, we additionally assume that each signal is generated as a combination of two frequency-sparse components: $\left(i\right)$ a common component $z_c$, which is present in all signals, and $\left(ii\right)$ an innovation component $z_j$, which is unique to each signal. The component ensemble is denoted by $Z = \left\lbrack z_c^{\ast},z_1^{\ast},\ldots,z_J^{\ast} \right\rbrack^{\ast}$. These combine additively, giving $x_j = z_c + z_j,j\in\Lambda$.
\newline\indent The frequency-sparse property of the components implies that these can be expressed as
\begin{IEEEeqnarray*}{rCl}
    z_c &=& \sum_{k=1}^{s_c} \left\lvert c_{c,k} \right\rvert a\left(f_{c,k},\phi_{c,k}\right) = \sum_{k=1}^{s_c} c_{c,k} a\left(f_{c,k}\right),\\
    z_j &=& \sum_{k=1}^{s_j} \left\lvert c_{j,k} \right\rvert a\left(f_{j,k},\phi_{j,k}\right) = \sum_{k=1}^{s_j} c_{j,k} a\left(f_{j,k}\right),
\end{IEEEeqnarray*}
where the atoms $a(f,\phi) = e^{i\phi}a(f)\in\mathbb{C}^n$, $f\in\left\lbrack 0,1 \right\rbrack$, $\phi\in\lbrack 0,2\pi )$ are defined as
\begin{equation}
    \left\lbrack a\left(f,\phi\right) \right\rbrack_t = e^{i\left(2\pi ft+\phi\right)},t\in L = \left\lbrace 0,\ldots,n-1 \right\rbrace.
\end{equation}
The sets of frequencies are defined as $\Omega_c = \left\lbrace f_{c,k} \right\rbrace_{k=1}^{s_c}$ and $\Omega_j = \left\lbrace f_{j,k} \right\rbrace_{k=1}^{s_j}$, and frequencies $f_{c,k},f_{j,k}$ are continuously valued in $\left\lbrack 0,1 \right\rbrack$.

\section{Concatenated Atomic Norm and Semidefinite Program Characterization}
\subsection{Concatenated Atomic Norm}
Define the atom set as
\begin{equation}
    \mathcal{A} = \left\lbrace a\left( f,\phi \right):f\in\left\lbrack 0,1 \right\rbrack, \phi\in\left\lbrack 0,2\pi \right) \right\rbrace,
\end{equation}
and the ''$\ell_0$-norm'' type atomic norm \cite{tang2013compressed} is defined as
\begin{equation}
    \left\lVert x \right\rVert_{\mathcal{A},0} = \inf\bigg\lbrace s : x = \sum_{k=1}^s \left\lvert c_k\right\rvert a\left( f_k,\phi_k \right) \bigg\rbrace,
\end{equation}
and its convex relaxation, the atomic norm \cite{chandrasekaran2007convex}, is defined as
\begin{equation}\label{definition:atomic_norm}
    \left\lVert x \right\rVert_{\mathcal{A}} = \inf\bigg\lbrace \sum_k \left\lvert c_k\right\rvert : x = \sum_k \left\lvert c_k\right\rvert a\left( f_k,\phi_k \right) \bigg\rbrace.
\end{equation}
To develop a norm description of the joint sparsity, we extend the atomic norm to the JFS setting and give the definition of CA-norm. The ''$\ell_0$-norm'' type CA-norm is defined as
\begin{equation}
    \left\lVert X \right\rVert_{\mathcal{CA},0} = \inf\Big\lbrace\left\lVert z_c\right\rVert_{\mathcal{A},0} + \sum_{j\in\Lambda}\left\lVert z_j\right\rVert_{\mathcal{A},0} :
    z_c + z_j = x_j,j\in\Lambda \Big\rbrace,
\end{equation}
and thus our goal becomes the minimization of $\left\lVert X \right\rVert_{\mathcal{CA},0}$ satisfying the measurement \emph{a-priori}
\begin{equation}
    \min_X \left\lVert X \right\rVert_{\mathcal{CA},0}\ \st y_j = \Phi_j x_j,j\in\Lambda,
\end{equation}
which can be shown to be equivalent to the following rank minimization problem using approach similar to Theorem \ref{theorem:semidefinite}
\begin{IEEEeqnarray*}{rCl}
    &\min_{\boldsymbol{u},Z,t}& \frac{1}{2n} \Big( \rank\left(\toep(u_c)\right) + \sum_{j\in\Lambda} \rank\left(\toep(u_j)\right) \Big)\\
    &\st&
    \begin{bmatrix}
        \diagtoep(\boldsymbol{u}) & Z \\
        Z^{\ast} & t
    \end{bmatrix} \succeq 0,
    y_j = \Phi_j (z_c + z_j),j\in\Lambda,\IEEEyesnumber
\end{IEEEeqnarray*}
where $\diagtoep(\boldsymbol{u})$ is the block diagonal matrix
\begin{equation*}
    \diag\big( \toep(u_c),\toep(u_1),\ldots,\toep(u_J) \big)
\end{equation*}
composed of toeplitz matrices generated from complex vectors $\boldsymbol{u} = \lbrace u_c,u_j,j\in\Lambda \rbrace$. Due to the NP-hard nature of rank minimization problem, solving the ''$\ell_0$-norm'' type CA-norm minimization would become computationally intractable. An alternative approach is to consider its convex relaxation, CA-norm, defined as
\begin{equation}\label{concatenated:atomic_norm}
    \left\lVert X\right\rVert_{\mathcal{CA}} = \inf\Big\lbrace \left\lVert z_c\right\rVert_{\mathcal{A}} + \sum_{j\in\Lambda}\left\lVert z_j\right\rVert_{\mathcal{A}} : z_c + z_j = x_j,j\in\Lambda \Big\rbrace.
\end{equation}
The atomic norm defined for single vector in \eqref{definition:atomic_norm} is actually a special case of CA-norm for $J=1$. In this work, we propose to solve the following CA-norm minimization problem to achieve accurate recovery of off-the-grid joint frequency-sparse signal
\begin{equation}\label{CA:minimization}
    \min_X \left\lVert X \right\rVert_{\mathcal{CA}}\ \st y_j = \Phi_j x_j,j\in\Lambda.
\end{equation}
\subsection{Semidefinite Program Characterization}
In this section, we prove the equivalence between CA-norm and SDP, and thus propose the computationally tractable SDP solution to the CA-norm minimization problem \eqref{CA:minimization}.
\begin{theorem}\label{theorem:semidefinite}
    For $x_j = z_c + z_j\in\mathbb{C}^n,j\in\Lambda$,
\begin{IEEEeqnarray*}{rCl}\label{semidefinite:program:unconstrained}
    \left\lVert X \right\rVert_{\mathcal{CA}} = &\inf& \bigg\lbrace \frac{1}{2n} \Big( \trace\left(\toep(u_c)\right) + \sum_{j\in\Lambda} \trace\left(\toep(u_j)\right) \Big) + \frac{1}{2}t : \\
    &&\begin{bmatrix}
        \diagtoep(\boldsymbol{u}) & Z \\
        Z^{\ast} & t
    \end{bmatrix} \succeq 0 \bigg\rbrace\IEEEyesnumber
\end{IEEEeqnarray*}
\end{theorem}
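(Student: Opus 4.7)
The plan is to reduce the theorem to the known SDP characterization of the scalar atomic norm \cite{tang2013compressed}, which states that for any $z\in\mathbb{C}^n$,
\begin{equation*}
    \left\lVert z\right\rVert_{\mathcal{A}} = \inf\left\{\frac{1}{2n}\trace(\toep(u)) + \frac{s}{2} : \begin{bmatrix}\toep(u) & z \\ z^{\ast} & s\end{bmatrix}\succeq 0\right\}.
\end{equation*}
I would prove the two inequalities between $\left\lVert X\right\rVert_{\mathcal{CA}}$ and the proposed SDP value separately.

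For the ``$\le$'' direction, take any admissible decomposition $x_j = z_c + z_j$. Using the scalar characterization, pick near-optimal certificates $(u_c,s_c)$ and $(u_j,s_j)$ that realize $\left\lVert z_c\right\rVert_{\mathcal{A}}$ and $\left\lVert z_j\right\rVert_{\mathcal{A}}$ up to $\epsilon$. Assemble a joint certificate by setting $\mathbf{u}=\{u_c,u_1,\dots,u_J\}$ and $t=s_c+\sum_{j}s_j$. The key observation is that the joint matrix $\bigl[\begin{smallmatrix}\diagtoep(\mathbf{u}) & Z \\ Z^{\ast} & t\end{smallmatrix}\bigr]$ can be written as a sum of $J+1$ PSD matrices, each obtained by embedding one of the individual $(n+1)\times(n+1)$ PSD blocks into the appropriate diagonal position and last-row/column slot and placing zeros elsewhere; such an embedding preserves positive semidefiniteness, and the sum is therefore PSD. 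Its objective equals the sum of the individual objectives, so taking the infimum over decompositions and letting $\epsilon\downarrow 0$ yields the bound.

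For the ``$\ge$'' direction, start from any feasible triple $(\mathbf{u},Z,t)$ of the joint SDP, read off $z_c,z_1,\dots,z_J$ from the block structure of $Z$, and observe that by construction $z_c+z_j = x_j$ so these form an admissible decomposition. A Schur complement applied to the block-diagonal upper-left block gives $t \ge z_c^{\ast}\toep(u_c)^{-1}z_c + \sum_{j}z_j^{\ast}\toep(u_j)^{-1}z_j$ (with the standard pseudoinverse/limit argument when a Toeplitz block is singular). Split $t$ additively as $t = t_c + \sum_j t_j$ with $t_c \ge z_c^{\ast}\toep(u_c)^{-1}z_c$ and analogously for each $j$; then every individual $2\times 2$ block $\bigl[\begin{smallmatrix}\toep(u_c) & z_c \\ z_c^{\ast} & t_c\end{smallmatrix}\bigr]$ is PSD, so the scalar characterization bounds $\left\lVert z_c\right\rVert_{\mathcal{A}} \le \frac{1}{2n}\trace(\toep(u_c)) + t_c/2$ and similarly for each $z_j$. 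Summing these inequalities and using $t_c + \sum_j t_j = t$ gives $\left\lVert z_c\right\rVert_{\mathcal{A}} + \sum_j\left\lVert z_j\right\rVert_{\mathcal{A}}$ bounded by the joint SDP objective, and infimizing on the left yields $\left\lVert X\right\rVert_{\mathcal{CA}}$ bounded by the SDP value.

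The main obstacle is reconciling the block-diagonal-versus-joint structure of the SDP, which couples every component through a single scalar corner $t$, with the CA-norm's definition as a sum of independent scalar atomic norms. The resolution in both directions rests on the same identity: the Schur complement of a block-diagonal matrix is additive in the components, so $t$ can be split additively without loss, and conversely zero-padded PSD blocks can be summed back into the joint matrix while preserving positive semidefiniteness.
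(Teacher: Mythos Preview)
Your proposal is correct and takes a genuinely different route from the paper in the second direction. For $\sdp(X)\le\lVert X\rVert_{\mathcal{CA}}$ both arguments are essentially the same (the paper writes out the rank-one atomic expansion explicitly where you invoke the scalar SDP certificate, but the resulting feasible point is identical). For $\sdp(X)\ge\lVert X\rVert_{\mathcal{CA}}$, however, the paper does not reduce to the scalar result: it performs the Vandermonde decomposition $\toep(u_c)=V_cD_cV_c^{\ast}$ (and likewise for each $j$), picks vectors $q$ with $V^{\ast}q=\sign(\omega)$, applies the Schur complement once to the full block matrix to obtain $t\bigl(\trace D_c+\sum_j\trace D_j\bigr)\ge\bigl(\lVert\omega_c\rVert_1+\sum_j\lVert\omega_j\rVert_1\bigr)^2$, and then finishes with AM--GM. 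In effect the paper re-runs the entire scalar proof inside the joint setting.

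Your approach is more modular: you exploit that the Schur complement of a block-diagonal matrix is additive, split the scalar corner $t$ accordingly, and then quote the scalar characterization of \cite{tang2013compressed} as a black box for each component. This is cleaner and makes transparent that the theorem is really a direct corollary of the single-vector case, whereas the paper's argument is self-contained and does not rely on the scalar result being already established. Both need the same care with singular Toeplitz blocks (range condition plus pseudoinverse), which you correctly flag.
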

\begin{proof}
First we define $I_c,I_j,j\in\Lambda$ the $J+1$ submatrices with dimension $(J+1)N\times N$ of the identity matrix $I$ in $\mathbb{R}^{(J+1)N\times(J+1)N}$, shown as $I = \lbrack I_c,I_1,\ldots,I_J \rbrack$. Denote the term on the right side of \eqref{semidefinite:program:unconstrained} by $\sdp\left( X \right)$. For any $z_c$ and $z_j$ satisfying $z_c + z_j = x_j,j\in\Lambda$, suppose
\begin{equation*}
    z_c = \sum_{k=1}^{s_c} \left\lvert c_{c,k} \right\rvert a\left( f_{c,k},\phi_{c,k} \right),z_j = \sum_{k=1}^{s_j} \left\lvert c_{j,k} \right\rvert a\left( f_{j,k},\phi_{j,k} \right).
\end{equation*}
Defining
\begin{equation*}
    u_c = \sum_{k=1}^{s_c} \left\lvert c_{c,k} \right\rvert a\left( f_{c,k} \right),u_j = \sum_{k=1}^{s_j} \left\lvert c_{j,k} \right\rvert a\left( f_{j,k} \right),
\end{equation*}
and $t = \sum_{k=1}^{s_c} \left\lvert c_{c,k} \right\rvert + \sum_{j\in\Lambda}\sum_{k=1}^{s_j} \left\lvert c_{j,k} \right\rvert$ yields
\begin{IEEEeqnarray*}{rCl}
    \toep \left( u_c \right) &=& \sum_{k=1}^{s_c} \left\lvert c_{c,k} \right\rvert a\left( f_{c,k},\phi_{c,k} \right)a\left( f_{c,k},\phi_{c,k} \right)^{\ast},\\
    \toep \left( u_j \right) &=& \sum_{k=1}^{s_j} \left\lvert c_{j,k} \right\rvert a\left( f_{j,k},\phi_{j,k} \right)a\left( f_{j,k},\phi_{j,k} \right)^{\ast},
\end{IEEEeqnarray*}
and thus
\begin{IEEEeqnarray*}{rCl}
    &&\begin{bmatrix}
        \diagtoep(\boldsymbol{u}) & Z \\
        Z^{\ast} & t
    \end{bmatrix}\\
    &=&
    \sum_{k=1}^{s_c}\left\lvert c_{c,k}\right\rvert
    \begin{bmatrix}
        I_ca\left(f_{c,k},\phi_{c,k}\right)\\1
    \end{bmatrix}
    \begin{bmatrix}
        I_ca\left(f_{c,k},\phi_{c,k}\right)\\1
    \end{bmatrix}^{\ast}\\
    &+&
    \sum_{j\in\Lambda} \sum_{k=1}^{s_j} \left\lvert c_{j,k}\right\rvert
    \begin{bmatrix}
        I_ja\left(f_{j,k},\phi_{j,k}\right)\\1
    \end{bmatrix}
    \begin{bmatrix}
        I_ja\left(f_{j,k},\phi_{j,k}\right)\\1
    \end{bmatrix}^{\ast}
\end{IEEEeqnarray*}
is positive semidefinite. It follows that
\begin{IEEEeqnarray*}{rCl}
    &\frac{1}{n}\trace\left(\toep\left(u_c\right)\right) = \sum_{k=1}^{s_c}\left\lvert c_{c,k}\right\rvert,\frac{1}{n}\trace\left(\toep\left(u_j\right)\right) = \sum_{k=1}^{s_j}\left\lvert c_{j,k}\right\rvert,\\
    &t = \frac{1}{n} \Big( \trace\left( \toep\left( u_c \right) \right)+\sum_{j\in\Lambda}\trace\left( \toep\left( u_j \right) \right) \Big).
\end{IEEEeqnarray*}
and thus $\sum_k\left\lvert c_{c,k}\right\rvert + \sum_{j\in\Lambda} \sum_k\left\lvert c_{j,k}\right\rvert \geq \sdp (X)$. Since it holds for any $z_c$ and $z_j$ satisfying $z_c + z_j = x_j$, it follows that $\left\lVert z_c \right\rVert_{\mathcal{A}} + \sum_{j\in\Lambda} \left\lVert z_j \right\rVert_{\mathcal{A}}\geq \sdp (X)$, implying that $\left\lVert X \right\rVert_{\mathcal{CA}} \geq \sdp (X)$.
\newline\indent Conversely, suppose for some $z_c,z_j$ and $u_c,u_j$ satisfying
\begin{equation}\label{semidefinite:inequation}
    \begin{bmatrix}
        \diagtoep(\boldsymbol{u}) & Z \\
        Z^{\ast} & t
    \end{bmatrix} \succeq 0,
\end{equation}
form the Vandermonde decomposition $\toep ( u_c ) = V_c D_c V_c^{\ast}$, $\toep ( u_j ) = V_j D_j V_j^{\ast}$ \cite{caratheodory1911zusammenhang}, where $D_c,D_j$ are positive definite diagonal matrices, and thus $\frac{1}{n}\trace(\toep(u_c)) = \trace(D_c)$ and $\frac{1}{n}\trace(\toep(u_j)) = \trace(D_j)$. It follows that $z_c\in\spn(V_c),z_j\in\spn(V_j)$, hence $z_c = V_c\omega_c,z_j = V_j\omega_j$. The full rank property of $V_c$ and $V_j$ implies that there exist vectors $q_c$ and $q_j$ satisfying $V_c^{\ast}q_c=\sign(\omega_c)$ and $V_j^{\ast}q_j=\sign(\omega_j)$. Define $\boldsymbol{V} = \diag(V_c,V_1,\ldots,V_J)$, $\boldsymbol{D} = \diag(D_c,D_1,\ldots,D_J)$, $\boldsymbol{\omega} = \lbrack \omega_c^{\ast},\omega_1^{\ast},\ldots,\omega_J^{\ast} \rbrack^{\ast}$, and $\boldsymbol{q} = \lbrack q_c^{\ast},q_1^{\ast},\ldots,q_J^{\ast} \rbrack^{\ast}$.
\newline\indent The application of Schur Complement Lemma \cite{boyd2004convex} yields
\begin{equation*}
    \diagtoep (\boldsymbol{u}) - \frac{1}{t} ZZ^{\ast} \succeq 0.
\end{equation*}
Further performing the Vandermonde decomposition, we have
\begin{equation*}
    \boldsymbol{V}\boldsymbol{D}\boldsymbol{V}^{\ast} - \frac{1}{t} \boldsymbol{V}\boldsymbol{\omega}\boldsymbol{\omega}^{\ast}\boldsymbol{V}^{\ast} \succeq 0
\end{equation*}
and thus
\begin{IEEEeqnarray*}{rCl}
    &&\trace(D_c) + \sum_{j\in\Lambda}\trace(D_j) = \boldsymbol{q}^{\ast}\boldsymbol{V}\boldsymbol{D}\boldsymbol{V}^{\ast}\boldsymbol{q}\\
    &\geq& \frac{1}{t} \boldsymbol{q}^{\ast}\boldsymbol{V}\boldsymbol{\omega}\boldsymbol{\omega}^{\ast} \boldsymbol{V}^{\ast}\boldsymbol{q}
    = \frac{1}{t} \Big( \left\lVert \omega_c \right\rVert_1 + \sum_{j\in\Lambda} \left\lVert \omega_j \right\rVert_1 \Big)^2,\IEEEyesnumber
\end{IEEEeqnarray*}
implying that
\begin{equation}
    t\Big(\trace(D_c) + \sum_{j\in\Lambda}\trace(D_j)\Big) \geq \Big( \left\lVert \omega_c \right\rVert_1 + \sum_{j\in\Lambda} \left\lVert \omega_j \right\rVert_1 \Big)^2.
\end{equation}
By the arithmetic geometric mean inequality,
\begin{IEEEeqnarray*}{rCl}
    \sdp(X) &=& \frac{1}{2n} \Big( \trace(\toep(u_c)) + \sum_{j\in\Lambda} \trace(\toep(u_j)) \Big) + \frac{1}{2} t\\
    &=& \frac{1}{2} \Big( \trace(D_c) + \sum_{j\in\Lambda} \trace(D_j) \Big) + \frac{1}{2} t\\
    &\geq& \bigg( t\Big(\trace(D_c) + \sum_{j\in\Lambda}\trace(D_j)\Big) \bigg)^{\frac{1}{2}}\\
    &\geq& \left\lVert \omega_c \right\rVert_1 + \sum_{j\in\Lambda} \left\lVert \omega_j \right\rVert_1 \geq \left\lVert X \right\rVert_{\mathcal{CA}},\IEEEyesnumber
\end{IEEEeqnarray*}
which completes the proof.
\end{proof}
With Theorem \ref{theorem:semidefinite}, \eqref{CA:minimization} is reformulated as the following computationally tractable SDP
\begin{IEEEeqnarray*}{rCl}\label{semidefinite:program}
    &\min_{\boldsymbol{u},Z,t}& \frac{1}{2n} \Big( \trace\left(\toep(u_c)\right) + \sum_{j\in\Lambda} \trace\left(\toep(u_j)\right) \Big) + \frac{1}{2}t\\
    &\st&
    \begin{bmatrix}
        \diagtoep(\boldsymbol{u}) & Z \\
        Z^{\ast} & t
    \end{bmatrix} \succeq 0,
    y_j = \Phi_j (z_c + z_j),j\in\Lambda.\IEEEyesnumber
\end{IEEEeqnarray*}

\subsection{Dual Certificate}
\begin{figure}[tb!]
\centering
\begin{subfigure}{0.158\textwidth}
    \includegraphics[width=\textwidth]{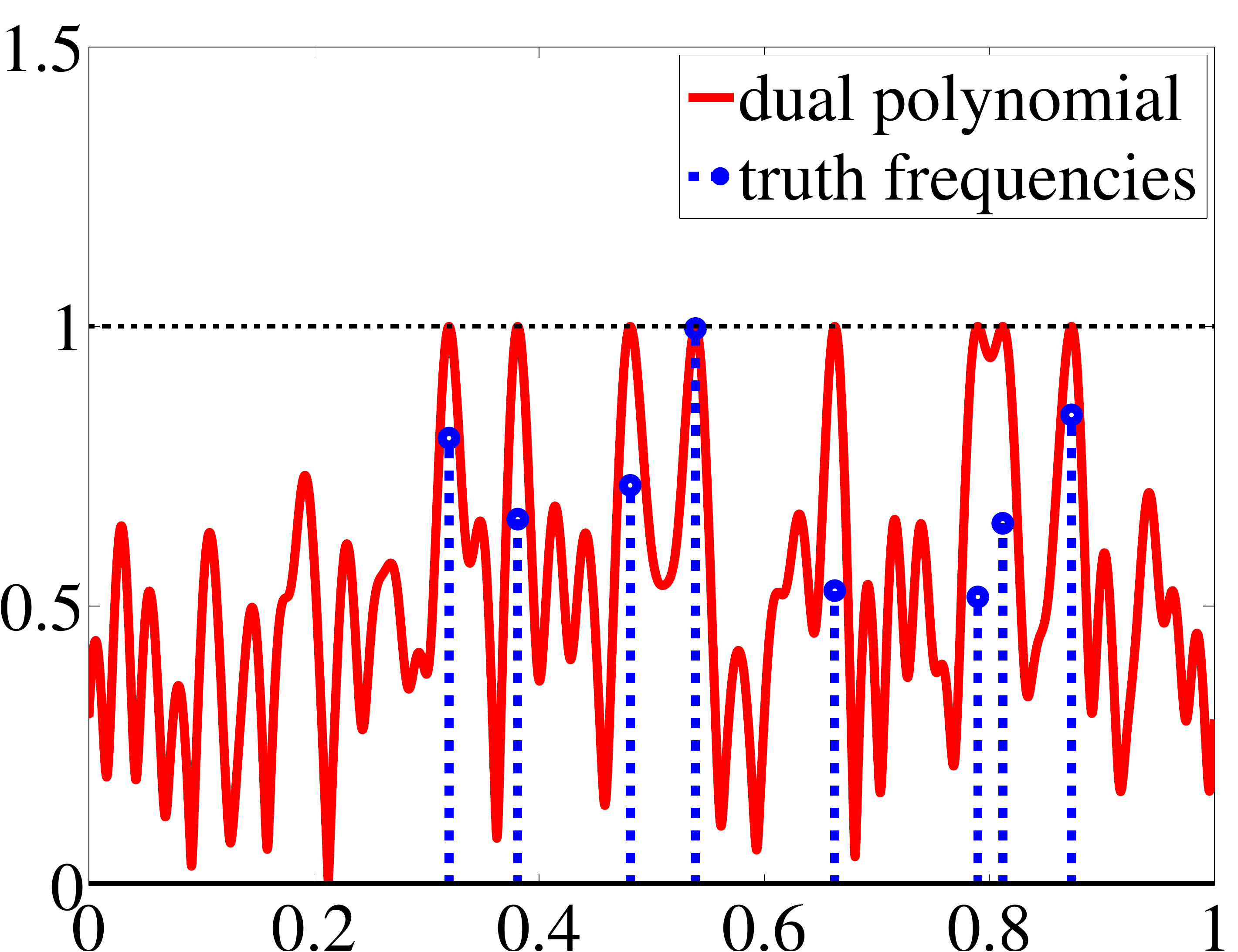}
    \caption{Common frequency}
\end{subfigure}
\begin{subfigure}{0.158\textwidth}
    \includegraphics[width=\textwidth]{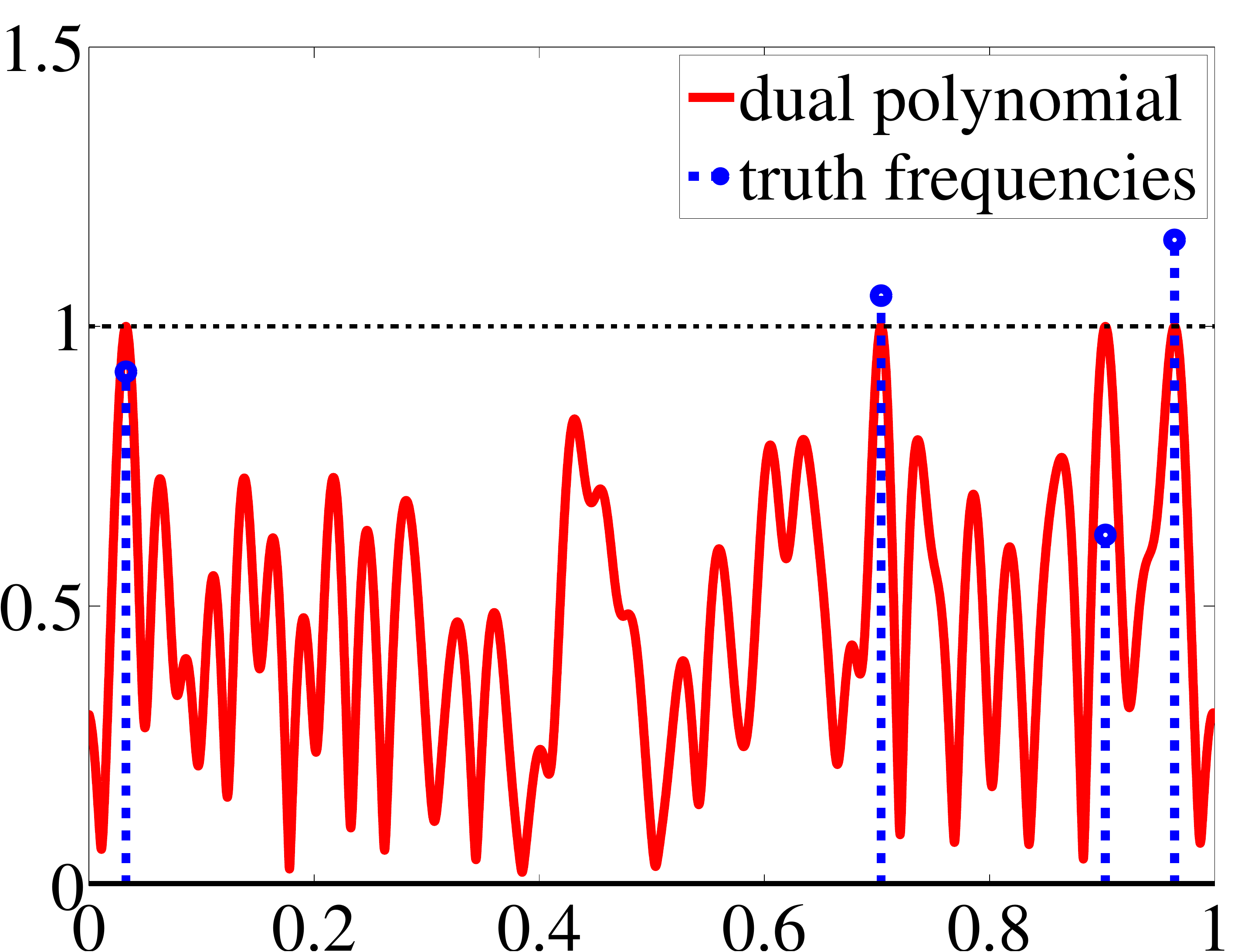}
    \caption{Innovation frequency}
\end{subfigure}
\begin{subfigure}{0.158\textwidth}
    \includegraphics[width=\textwidth]{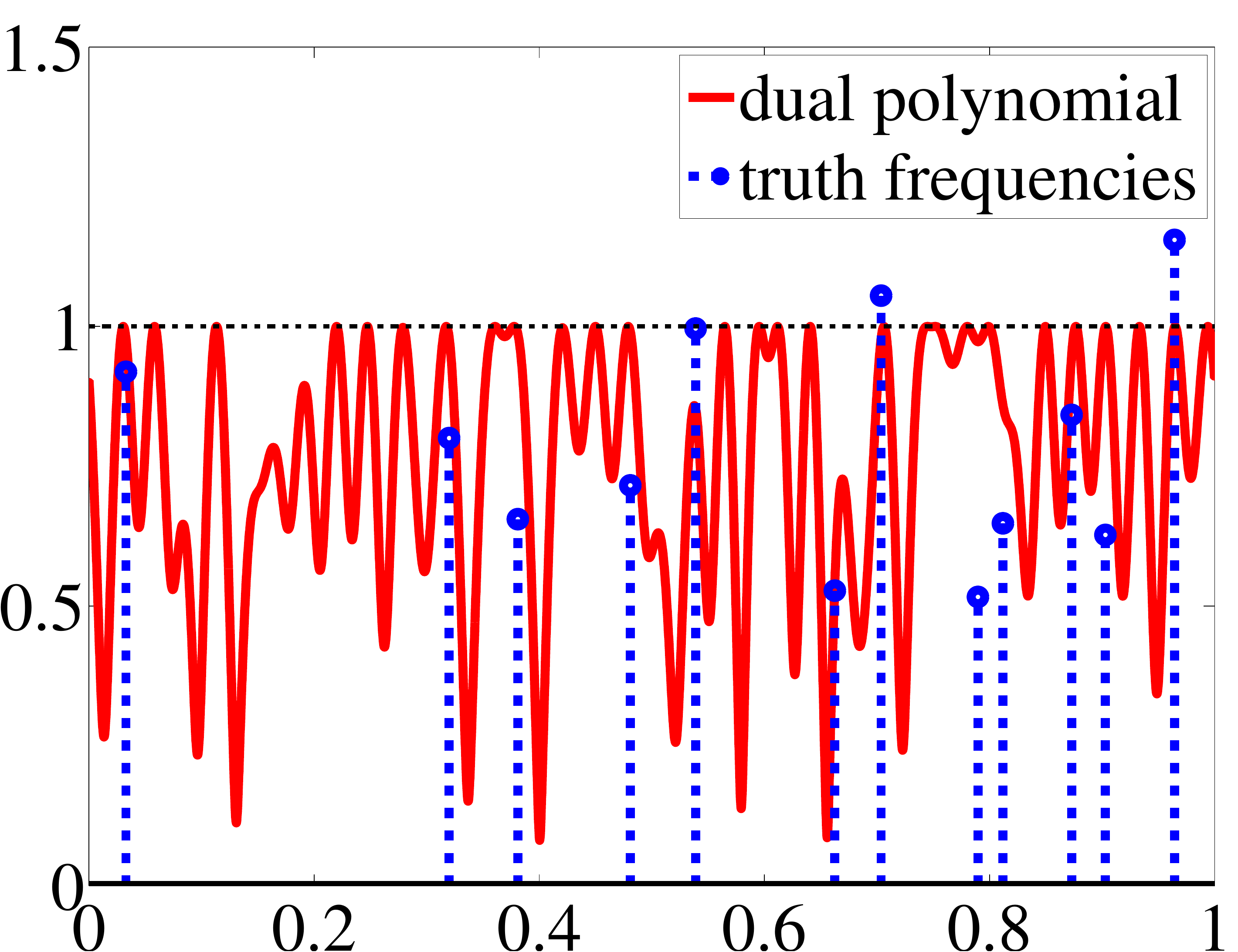}
    \caption{Separate recovery}
\end{subfigure}
\caption{Frequency localization from dual polynomial ($J=32$, $n=40$, $m_j=20,\forall j$)}
\label{figure:frequency:localization}
\end{figure}
In this section we study the dual problem to check the successful recovery of the optimization \eqref{CA:minimization} \cite{chandrasekaran2007convex}. We establish the conditions the dual certificate should satisfy to guarantee uniqueness and optimality. Denote the optimal solution to \eqref{CA:minimization} by $X^{\star}$, and let $Q = \lbrack q_1^{\ast},\ldots,q_J^{\ast} \rbrack^{\ast}$, where $q_j\in\mathbb{C}^{m_j}$. Then the dual problem of \eqref{CA:minimization} is
\begin{equation}\label{dual_problem}
    \max_Q \langle \Phi^{\ast}Q,X^{\star} \rangle_{\mathbb{R}}\ \st \lVert \Phi^{\ast}Q \rVert_{\mathcal{CA}}^{\ast} \leq 1,
\end{equation}
where $\lVert \cdot \rVert_{\mathcal{CA}}^{\ast}$ is the dual norm of CA-norm, and
\begin{IEEEeqnarray*}{rCl}
    &&\lVert \Phi^{\ast}Q \rVert_{\mathcal{CA}}^{\ast} = \sup_{\lVert X \rVert_{\mathcal{CA}}=1} \langle \Phi^{\ast}Q,X \rangle_{\mathbb{R}}\\
    &=& \sup_{\lVert z_c \rVert_{\mathcal{A}} + \sum_j \lVert z_j \rVert_{\mathcal{A}}=1} \bigg( \big\langle \sum_{j\in\Lambda}\Phi_j^{\ast}q_j,z_c \big\rangle_{\mathbb{R}} + \sum_{j\in\Lambda}\big\langle\Phi_j^{\ast}q_j,z_j \big\rangle_{\mathbb{R}} \bigg)\\
    &=& \sup_{\substack{\lvert c_c \rvert + \sum_j\lvert c_j \rvert = 1 \\ \phi_c,\phi_j \in \left\lbrack 0,2\pi \right),f_c,f_j \in \left\lbrack 0,1 \right\rbrack}} \bigg( \lvert c_c \rvert \big\langle \sum_{j\in\Lambda}\Phi_j^{\ast}q_j,e^{j\phi_c}a(f_c) \big\rangle_{\mathbb{R}}\\
    &&+ \sum_{j\in\Lambda} \lvert c_j \rvert \big\langle\Phi_j^{\ast}q_j,e^{j\phi_j}a(f_j) \big\rangle_{\mathbb{R}} \bigg)\\
    &=& \sup_{f\in\lbrack 0,1 \rbrack} \max \bigg\lbrace \big\lvert \big\langle \sum_{j\in\Lambda}\Phi_j^{\ast}q_j,a(f) \big\rangle \big\rvert, \max_{j\in\Lambda} \big\lvert \big\langle \Phi_j^{\ast}q_j,a(f) \big\rangle \big\rvert \bigg\rbrace.
\end{IEEEeqnarray*}
\indent Strong duality simply holds since \eqref{CA:minimization} is only equality constrained and thus satisfies Slater's condition \cite{boyd2004convex}. Based on this, a dual certificate to the optimality of the solution to \eqref{CA:minimization} can be obtained.
\begin{proposition}\label{proposition:dual_polynomial}
The solution $\hat{X} = X^{\star}$ is the unique optimizer to \eqref{CA:minimization} if there exists a dual polynomial ensemble $Q_j(f) = \langle \Phi_j^{\ast}q_j,a(f) \rangle_{\mathbb{R}},j\in\Lambda$ satisfying
\begin{IEEEeqnarray}{rCl}
    \label{dual:certificate:innovation:realfreq}
    Q_j\left(f_{j,k}\right) &=& \sign\left(c_{j,k}\right),\forall f_{j,k}\in\Omega_j, j\in\Lambda\\
    \label{dual:certificate:common:realfreq}
    \sum_{j\in\Lambda} Q_j\left(f_{c,k}\right) &=& \sign\left(c_{c,k}\right),\forall f_{c,k}\in\Omega_c\\
    \label{dual:certificate:innovation:falsefreq}
    \left\lvert Q_j\left( f \right) \right\rvert & < & 1,\forall f \notin \Omega_j,j\in\Lambda\\
    \label{dual:certificate:common:falsefreq}
    \Big\lvert \sum_{j\in\Lambda} Q_j\left( f \right) \Big\rvert &<& 1,\forall f \notin \Omega_c.
\end{IEEEeqnarray}
\end{proposition}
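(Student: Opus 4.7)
My plan is to follow the standard Lagrangian dual-certificate argument familiar from the single-signal atomic-norm setting (Tang et al.), adapted to the block structure of the CA-norm.

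First, I would verify dual feasibility of $Q$. The paper already computes $\lVert \Phi^{\ast} Q\rVert_{\mathcal{CA}}^{\ast}$ as a supremum over $f\in[0,1]$ of $\max\{\lvert\sum_{j}Q_j(f)\rvert,\max_{j}\lvert Q_j(f)\rvert\}$; the strict off-support inequalities \eqref{dual:certificate:innovation:falsefreq}--\eqref{dual:certificate:common:falsefreq} bound the two arguments strictly below $1$, while the interpolation equalities \eqref{dual:certificate:innovation:realfreq}--\eqref{dual:certificate:common:realfreq} (whose right-hand sides are unit-modulus signs) give magnitude exactly $1$ on the support. Hence $\lVert \Phi^{\ast} Q\rVert_{\mathcal{CA}}^{\ast}\le 1$, so $Q$ is dual feasible.

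Second, I would match primal and dual values. Taking atomic expansions $z_c^{\star}=\sum_k c_{c,k} a(f_{c,k})$ and $z_j^{\star}=\sum_k c_{j,k} a(f_{j,k})$ that attain the CA-norm of $X^{\star}$, expand
\[
\langle \Phi^{\ast} Q, X^{\star}\rangle_{\mathbb{R}} = \mathrm{Re}\bigl\langle \textstyle\sum_{j\in\Lambda}\Phi_j^{\ast} q_j,\, z_c^{\star}\bigr\rangle + \textstyle\sum_{j\in\Lambda}\mathrm{Re}\langle \Phi_j^{\ast} q_j,\, z_j^{\star}\rangle,
\]
and apply the interpolation conditions to collapse each atom's contribution to $\lvert c_{c,k}\rvert$ or $\lvert c_{j,k}\rvert$. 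The resulting sum equals $\lVert X^{\star}\rVert_{\mathcal{CA}}$, so by weak duality $X^{\star}$ is primal optimal and $Q$ is dual optimal.

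Third, for uniqueness, I would suppose $\hat X$ is any other primal optimum with atomic decompositions $\hat z_c=\sum_k \hat c_{c,k} a(\hat f_{c,k})$ and similarly for each $\hat z_j$. Strong duality forces $\langle \Phi^{\ast} Q,\hat X\rangle_{\mathbb{R}} = \lVert\hat X\rVert_{\mathcal{CA}}$; bounding the inner product atom-by-atom by $\lvert\sum_{j}Q_j(\hat f_{c,k})\rvert$ and $\lvert Q_j(\hat f_{j,k})\rvert$ and invoking the strict off-support inequalities forces $\hat f_{c,k}\in\Omega_c$ and $\hat f_{j,k}\in\Omega_j$. The equality cases then pin the phase of each coefficient to match the corresponding sign condition, and Vandermonde linear independence of the atoms at the true frequencies, combined with the measurement constraints, yields $\hat X=X^{\star}$.

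The main obstacle I expect is this last step, because the common/innovation splitting $x_j = z_c + z_j$ is itself ambiguous (mass can be shifted between $z_c$ and $z_j$ without changing $x_j$). The argument must therefore deliver uniqueness of $X$ rather than of $Z$, extracted from the interplay of strict off-support inequality, sign matching on the support, and measurement consistency; making the final injectivity / Vandermonde step go through in the presence of this decomposition ambiguity is the delicate point.
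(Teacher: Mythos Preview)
Your proposal is correct and follows essentially the same route as the paper: dual feasibility from the interpolation and strict off-support conditions, zero duality gap via H\"older and strong duality, then support localization from the strict inequalities and uniqueness by linear independence of the atoms on $\Omega_c$ and on each $\Omega_j$. The decomposition-ambiguity concern you flag is real, but the paper elides it too, closing the proof with a one-line appeal to the linear independence of the relevant atom sets rather than explicitly arguing uniqueness of $X$ versus $Z$.
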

\begin{proof}
Any $Q$ satisfying the conditions in Proposition \ref{proposition:dual_polynomial} is dual feasible. It also follows that for $X^{\star}$
\begin{IEEEeqnarray*}{rCl}\label{duality:truefreq}
    &&\langle \Phi^{\ast}Q,X^{\star} \rangle_{\mathbb{R}}\\
    &=& \real \Big( \sum_{k=1}^{s_c} c_{c,k}^{\ast} \sum_{j\in\Lambda} \left\langle \Phi_j^{\ast}q_j,a\left( f_{c,k} \right) \right\rangle \Big)\\
    &+& \sum_{j\in\Lambda} \real \Big( \sum_{k=1}^{s_j} c_{j,k}^{\ast} \left\langle \Phi_j^{\ast}q_j,a\left( f_{j,k} \right) \right\rangle \Big)\\
    &=& \sum_{k=1}^{s_c} \left\lvert c_{c,k} \right\rvert + \sum_{j\in\Lambda} \sum_{k=1}^{s_j} \left\lvert c_{j,k} \right\rvert \geq \left\lVert X^{\star} \right\rVert_{\mathcal{CA}},\IEEEyesnumber
\end{IEEEeqnarray*}
where the last inequality is due to the definition of CA-norm. On the other hand, H\"older's inequality \cite{boyd2004convex} states that $\left\langle \Phi^{\ast}Q,X^{\star} \right\rangle_{\mathbb{R}} \leq \left\lVert \Phi^{\ast}Q \right\rVert_{\mathcal{CA}}^{\ast}\left\lVert X^{\star} \right\rVert_{\mathcal{CA}} \leq \left\lVert X^{\star} \right\rVert_{\mathcal{CA}}$, which thus combined with \eqref{duality:truefreq} implies that $\left\langle \Phi^{\ast}Q,Z^{\star} \right\rangle_{\mathbb{R}} = \left\lVert X^{\star} \right\rVert_{\mathcal{CA}}$. Because of strong duality, the primal-dual feasibility of $\left( X^{\star}, \Phi^{\ast}Q \right)$ implies that $X^{\star}$ is a primal optimal solution and $\Phi^{\ast}Q$ is a dual optimal solution \cite{boyd2004convex}.
\newline\indent For uniqueness, suppose $\hat{X}$ with $z_c = \sum_k \hat{c}_{c,k} a(\hat{f}_{c,k})$ and $z_j = \sum_k \hat{c}_{c,j} a(\hat{f}_{c,j})$ is another solution, then we have
\begin{IEEEeqnarray*}{rCl}
    &&\langle \Phi^{\ast}Q,\hat{X} \rangle_{\mathbb{R}}\\
    &=& \sum_{\hat{f}_{c,k}\in\Omega_c}\real \Big( \hat{c}^{\ast}_{c,k} \big\langle \sum_{j\in\Lambda} \Phi_j^{\ast}q_j,a( \hat{f}_{c,k} ) \big\rangle \Big)\\
    &+& \sum_{j\in\Lambda}\sum_{\hat{f}_{j,k}\in\Omega_j} \real \Big( \hat{c}_{j,k}^{\ast} \big\langle \Phi_j^{\ast}q_j, a( \hat{f}_{j,k} ) \big\rangle \Big)\\
    &+& \sum_{\hat{f}_{c,k}\notin\Omega_c} \real \Big( \hat{c}^{\ast}_{c,k} \big\langle \sum_{j\in\Lambda} \Phi_j^{\ast}q_j,a( \hat{f}_{c,k} ) \big\rangle \Big)\\
    &+& \sum_{j\in\Lambda}\sum_{\hat{f}_{j,k}\notin\Omega_j} \real \Big( \hat{c}_{j,k}^{\ast} \big\langle \Phi_j^{\ast}q_j, a( \hat{f}_{j,k} ) \big\rangle \Big)\\
    &<& \sum_{\hat{f}_{c,k}\in\Omega_c} \left\lvert \hat{c}_{c,k} \right\rvert + \sum_{j\in\Lambda}\sum_{\hat{f}_{j,k}\in\Omega_j} \left\lvert \hat{c}_{j,k} \right\rvert\\
    &+& \sum_{\hat{f}_{c,k}\notin\Omega_c} \left\lvert \hat{c}_{c,k} \right\rvert + \sum_{j\in\Lambda}\sum_{\hat{f}_{j,k}\notin\Omega_j} \left\lvert \hat{c}_{j,k} \right\rvert
    = \big\lVert \hat{X} \big\rVert_{\mathcal{CA}}\IEEEyesnumber
\end{IEEEeqnarray*}
due to conditions \eqref{dual:certificate:innovation:falsefreq} and \eqref{dual:certificate:common:falsefreq} if either $z_c$ is not solely supported on $\Omega_c$ or $z_j$ is not solely supported on $\Omega_j$, contradicting strong duality. Therefore, all optimal solutions must have a common component supported on $\Omega_c$ and innovation components supported on $\Omega_j,j\in\Lambda$, respectively. The uniqueness of optimal solution simply follows from the linear independency of the set of atoms with frequencies in $\Omega_c$ or in $\Omega_j$.
\end{proof}
Proposition \ref{proposition:dual_polynomial} serves as a guide for the construction of dual polynomials, of which the details we leave for future work. The construction of dual polynomials following the guide would give a comprehensive theoretical analysis of the performance of the CA-norm minimization. A consequence of this proposition is a way to determine the composing frequencies by evaluating the reconstructed dual polynomial ensemble and identifying the locations where \eqref{dual:certificate:innovation:realfreq} and \eqref{dual:certificate:common:realfreq} hold. An instance of frequency localization is illustrated in Figure \ref{figure:frequency:localization}. For joint recovery, both the dual polynomials and their sum achieve modulus 1 only at truth frequencies, and are strictly bounded in other regions, while the separate recovery suffers from severe inaccuracy and errors.

\section{Numerical Experiments}
\begin{figure}[tb!]
\centering
\includegraphics[width=0.37\textwidth]{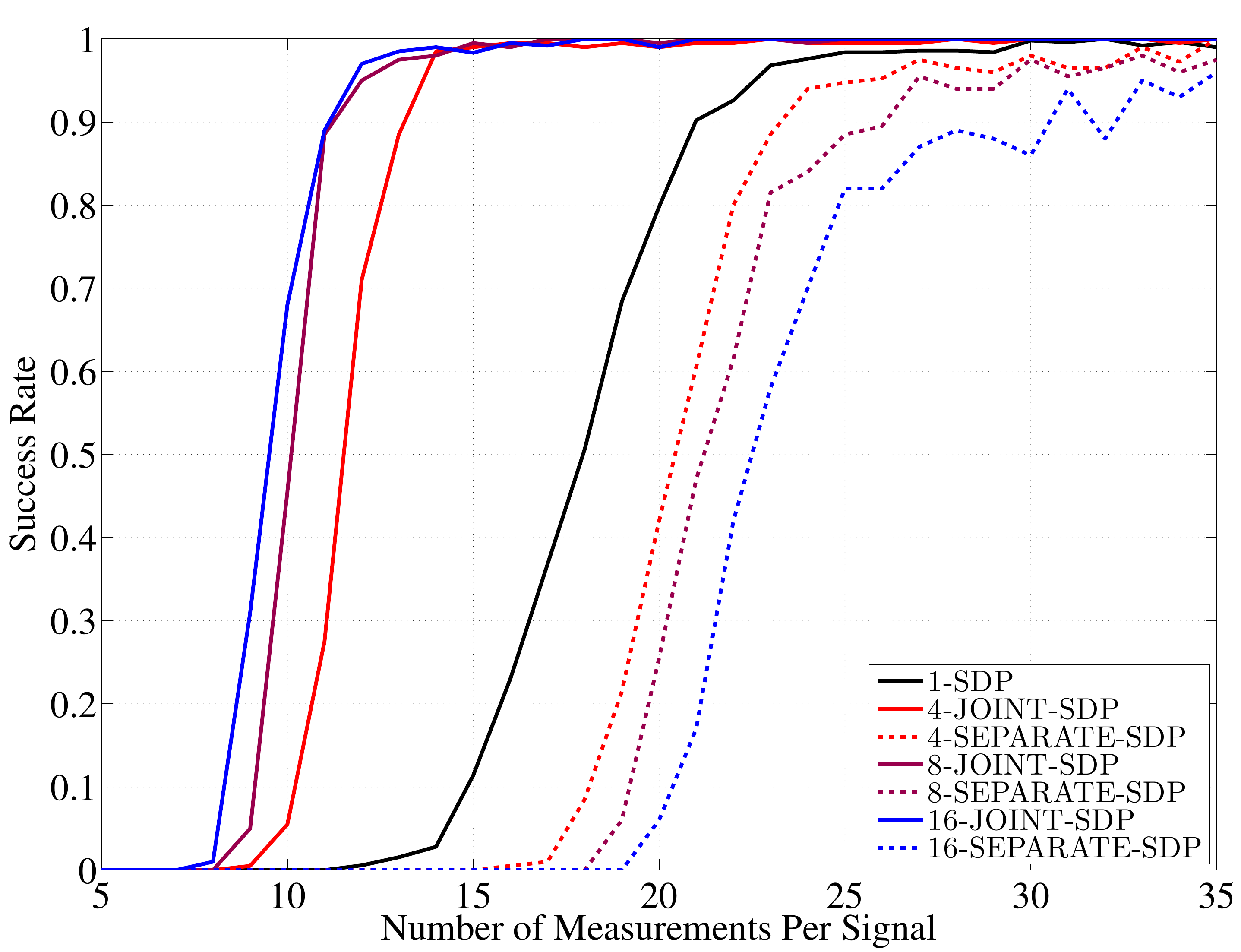}
\caption{Success Rate vs Number of Measurements per Signal}
\label{figure:success_rate}
\end{figure}
In this section, we evaluated the proposed approach by performing numerical experiments. Since the numerical results in our previous work \cite{lu2014spectral} illustrate that the atomic norm minimization yields state-of-the-art performance for noiseless recovery, we simply sidestep other approaches and compare the joint recovery approach to its separate counterpart. We chose the success rate as the major performance measure. The recovery is considered successful if the relative error $\lVert \hat{x}_j - x_j \rVert_2 / \lVert x_j \rVert_2 \leq 10^{-6},\forall j\in\Lambda$ is true. We set $n = 40$, $s_c = 4$, $s_j = 2$ for each signal. Frequencies were generated uniformly random on $\lbrack 0,1 \rbrack$ with an additional constraint on minimum separation $\Delta$ as follows
\begin{equation*}
    \Delta = \min_j \inf_{f,f^{\prime}\in\Omega_c \cup \Omega_j:f \neq f^{\prime}} \lvert f-f^{\prime} \rvert \geq \frac{1}{n}.
\end{equation*}
Phase shifts were selected uniformly random in $\lbrack 0,2\pi )$, and magnitudes were generated as $0.5 + \omega^2$ with $\omega$ a zero-mean unit-variance Gaussian random variable. The sensing matrices were random sub-identity matrices. We performed Monte Carlo experiments for $m_j$ from $5$ to $35$ and $J = 1,4,8,16$, and recorded the success rate from $200$ trials. The joint SDP \eqref{semidefinite:program} was solved via SDPT3-4.0 toolbox \cite{toh1999sdpt}.The performance curves are shown in Figure \ref{figure:success_rate}.\footnote{The authors would like to thank Gongguo Tang for providing the implementation of his algorithm.}
\newline\indent The joint SDP exhibits a definite advantage over its separate counterpart. The joint SDP achieves exact recovery after $m_j$ exceeds a certain threshold. For $J=4$, the intrinsic sparsity of signal ensemble is $K=(4+2\times 4)\times 3 = 36$, since at least three independent parameters are required to determine one sinusoid. The number of measurements required for perfect recovery is $14\times 4 = 56$ for joint SDP, approximately $1.56K$, while separate SDP requires at least $30\times 4 = 120$ measurements to achieve comparable performance, approximately $3.33K$. Hence the joint SDP in practice overcomes the performance bound encountered in separate recovery. The gap increases with the increase of $J$, implying the promise of application to large-scale sensor systems.

\section{Conclusion}
In this letter, we proposed the CA-norm minimization for recovering a JFS signal ensemble sharing a common frequency-sparse component from the collection of their compressed measurements. We established a computationally tractable joint SDP solution to the CA-norm minimization. We also characterized a dual certificate for the optimality of the proposed optimization problem. As shown in Figure \ref{figure:success_rate}, the definite advantage of joint SDP for large $J$ implies the promising application to large-scale sensor systems. The core contribution is twofold. First, we extended off-the-grid formulation to distributed CS framework, providing an instance of addressing signal ensemble with joint structure specified in continuously parameterized dictionaries. Second, the requirements of the certificate polynomials are far more stringent and require a non-trivial modification of construction using additional kernel parts. The successful localization of common frequencies is dependent on the combined contribution of all polynomials.

\balance
\bibliographystyle{IEEEtran}
\bibliography{reference}


%




\ifCLASSOPTIONcaptionsoff
  \newpage
\fi



%

\end{document}